\newtheorem{theorem}{Theorem}
\newtheorem{lemma}[theorem]{Lemma}
\newcommand{\Oh}{\mathrm{O}}
\title{Can there be an explicit formula for implied volatility?}
\author{Stefan Gerhold}
\address{Vienna University of Technology, Institute of Mathematical Methods in Economics,
Wiedner Hauptstr.\ 8/105-1,
A-1040 Vienna, Austria}
\email{sgerhold at fam.tuwien.ac.at}
\date{\today}
\begin{document}

\begin{abstract}
  It is ``well known'' that there is no explicit expression for the 
  Black-Scholes implied volatility. We \emph{prove} that,
  as a function of underlying, strike, and call price, implied volatility
  does not belong
  to the class of $D$-finite functions. This does not rule out all
  explicit expressions, but shows that implied volatility does not
  belong to a certain large class, which contains many elementary functions
  and classical special functions.
\end{abstract}

\keywords{Call option, Black-Scholes formula, implied volatility,
  $D$-finite function, asymptotics}

\subjclass[2010]{91G20, 33E20}

\maketitle

\section{Introduction}

The Black-Scholes formula is the most popular way to price European
stock options. Assuming a frictionless market and an underlying
modelled by geometric Brownian motion with volatility~$\sigma$,
the arbitrage free price
of a European call option with strike~$K$, maturity~$T$, and
initial stock price~$S$ is
\[
  C_{\mathrm{BS}}(S,K,T,\sigma) = S N(d_1) - K N(d_2),
\]
where $N$ is the cdf of the standard normal distribution, and
\[
  d_{1,2} = \frac{\log (S/K)}{\sigma \sqrt{T}} \pm
    \frac{\sigma \sqrt{T}}{2}.
\]
(We assume a zero interest rate for simplicity.) Given a call price
surface $C$ that satisfies the no-arbitrage bounds
\begin{equation}\label{eq:arb bds}
  (S-K)^+ \leq C(S,K,T) < S,
\end{equation}
the Black-Scholes formula can be inverted w.r.t.~$\sigma$
to obtain the \emph{implied volatility} $\sigma_{\mathrm{imp}}$:
\begin{equation}\label{eq:imp eq}
  C_{\mathrm{BS}}(S,K,T,\sigma_{\mathrm{imp}}(S,K,T)) = C(S,K,T).
\end{equation}
Implied volatility allows to compare option prices for different strikes, maturities,
underlyings, and valuation times. For entry points to the literature
on implied volatility, see, e.g., \cite{FoJa09,FrGeGuSt11,Ga06,Le05,ScWi08}.

The compute the implied volatility from a call price,
equation~\eqref{eq:imp eq} has to be solved numerically; many
authors have added that the inversion cannot be done in closed form.
The latter statement has been made so often that one might ask whether it
can be turned into something tangible. First, note that it is hard to say
anything about explicit expressions for the function
\begin{equation}\label{eq:not useful}
  (S,K,T) \mapsto \sigma_{\mathrm{imp}}(S,K,T).
\end{equation}
This function actually depends on~$C(\cdot,\cdot,\cdot)$,
and it would be unnatural to make assumptions on
the generic call price surface~$C$ that would
allow to refute a closed form for~\eqref{eq:not useful}. Instead, we consider
the function~$I$ which satisfies
\begin{equation}\label{eq:def I}
  C_{\mathrm{BS}}(S,K,I(S,K,c)) = c, \quad S,K >0,\ (S-K)^+< c < S,
\end{equation}
and is defined on the open set
\[
  \mathcal{D}_I := \{ (S,K,c) \in \mathbb{R}^3 : S,K >0,\ (S-K)^+
    < c < S \}.
\]
The maturity~$T$ has been omitted here, and we will continue to do so,
assuming a fixed $T>0$ throughout.
Note that any kind of explicit expression for~$I$ with variable~$T$ would
imply an explicit expression for any fixed~$T$.
The function~$I$ clearly exists and is unique, since the Black-Scholes
call price increases from the lower to the upper no-arbitrage bound
as~$\sigma$ increases.
If a call price surface~$C$ satisfies the slightly strengthened bounds
\begin{equation}\label{eq:strict}
  (S-K)^+ < C(S,K) < S,
\end{equation}
which usually hold for $T>0$, then the corresponding implied volatility
$\sigma_{\mathrm{imp}}$ satisfies
\[
  \sigma_{\mathrm{imp}}(S,K) = I(S,K,C(S,K)), \quad S,K>0.
\]
An example where the lower bound in~\eqref{eq:strict} fails would be
an out-of-the-money call that cannot move into-the-money, as may happen
in a binomial model, say. We impose~\eqref{eq:strict} instead
of~\eqref{eq:arb bds} to make the domain of~$I$ open, so that~$I$
is differentiable on its whole domain~$\mathcal{D}_I$.

Since the Black-Scholes call price is real analytic for $S,K,\sigma>0$,
and the Black-Scholes vega $\partial C_{\mathrm{BS}}/ \partial \sigma$
is positive, the implicit function theorem~\cite{KrPa02} shows that~$I$
is real analytic (hence $C^\infty$) on~$\mathcal{D}_I$. The question we now ask is whether
the function~$I$ admits a closed form.
To give a partial answer, we recall in Section~\ref{se:d-fin} the definition of the class
of $D$-finite functions.
The main result of this note (Theorem~\ref{thm:main} in Section~\ref{se:main}
below) is that~$I$ is \emph{not} $D$-finite.

We briefly mention some other results of the kind ``a certain
function does not belong to a certain class''. There is a wealth of theorems
about transcendental functions, i.e., non-algebraic ones; for examples, see, e.g.,
Schmidt~\cite{Sc90} and Bell et al.~\cite{BeBrCo12} and the references given there.
Rubel~\cite{Ru89} discusses ``transcendentally transcendental'' functions, which means
that they do not satisfy an algebraic differential equation. The class of
$D$-finite functions, which we are interested in, lies in between the algebraic
and differentially algebraic classes. Finally, Bronstein et al.~\cite{BrCoDaJe08}
have shown that the Lambert $W$ function is not Liouvillian, which roughly
means that it cannot be expressed by iterated integration and exponentiation
starting with an algebraic function.

\section{$D$-finite functions}\label{se:d-fin}

Suppose that a $C^\infty$-smooth function~$f$ is defined
on an open set $\mathcal{D}_f \subseteq \mathbb{R}^n$. It is called
$D$-finite if it satisfies PDEs
\begin{align}
  P_{1,d_1}(\mathbf{x}) \frac{\partial^{d_1}\! f(\mathbf{x})}{\partial x_1^{d_1}}
    + P_{1,d_1-1}(\mathbf{x}) \frac{\partial^{d_1-1}\! f(\mathbf{x})}{\partial x_1^{d_1-1}}
    + \dots +
    P_{1,1}(\mathbf{x}) \frac{\partial f(\mathbf{x})}{\partial x_1}
    + P_{1,0}(\mathbf{x}) f(\mathbf{x})  &= 0, \notag \\
  &\vdots \label{eq:pde} \\
  P_{n,d_n} (\mathbf{x}) \frac{\partial^{d_n}\! f(\mathbf{x})}{\partial x_n^{d_n}}
    + P_{n,d_n-1}(\mathbf{x}) \frac{\partial^{d_n-1}\! f(\mathbf{x})}{\partial x_n^{d_n-1}}
    + \dots +
    P_{n,1}(\mathbf{x}) \frac{\partial f(\mathbf{x})}{\partial x_n}
    + P_{n,0}(\mathbf{x}) f(\mathbf{x})  &= 0,  \notag
\end{align}
valid for $\mathbf{x}=(x_1,\dots,x_n) \in \mathcal{D}_f$, where $d_i\geq 1$
for $i=1,\dots,n$, and the~$P_{ij}$ are polynomials such that
$P_{i,d_i}$ is not identically zero on~$\mathcal{D}_f$ 
for $i=1,\dots,n$. If~$f$ is real analytic,
let us fix an arbitrary point $\mathbf{x}_0\in\mathcal{D}_f$ and consider the Taylor
expansion of~$f$ at~$\mathbf{x}_0$. When viewed as a formal power series, 
the PDEs~\eqref{eq:pde} show that its partial
derivatives generate a finite dimensional vector
space over the field of rational functions. This algebraic definition
of $D$-finiteness
is the usual one in the literature~\cite{Li89,St80}. It cannot be used directly
for \emph{functions}, though: Unlike the field of formal Laurent series, $C^\infty(\mathcal{D}_f)$ is not
a vector space over the field of rational functions.
Recall that in one dimension, the field of formal Laurent series
$R((X))$ in the indeterminate~$X$ consists of the formal series
\[
  \Big\{ \sum_{n\geq n_0} a_n X^n : n_0 \in \mathbb{Z},\
    a_n \in\mathbb{R}, n\geq n_0 \Big\}.
\]
The multivariate case is more involved; see Aparicio Monforte and
Kauers~\cite{ApKa12}.

Rational, and, more generally, algebraic functions are $D$-finite,
as are the elementary functions $\exp$, $\log$, $\sin$, and $\cos$
(but not $\tan$ and $\cot$).
About 60\% of the special functions covered in the classical handbook
by Abramowitz and Stegun~\cite{AbSt64} are $D$-finite.
Examples include Bessel functions, Airy functions, exponential integral, dilogarithm,
and hypergeometric series.
Furthermore, the class of $D$-finite functions is closed under
addition, multiplication, (in-)definite integration, and Laplace
transform.
Division, on the other hand,
does \emph{not} preserve $D$-finiteness in general, nor does composition,
unless the inner function is algebraic. Thus, $x\to e^{\sqrt{x}}$ is
$D$-finite, whereas $x\to e^{e^x}$ and $x\to 1/\sin x$ are not.
The closure under algebraic substitution will be applied
in the proof of our Theorem~\ref{thm:main} below.

$D$-finite functions resp.\ power series have been studied extensively
in the discrete mathematics and symbolic computation literature.
Indeed, provided the coefficients
of the polynomials in~\eqref{eq:pde} and the initial conditions
can be represented with a finite amount of information, one
has a finite data structure that can represent reasonably general
functions (or, in symbolic computation, rather formal power series).
The closure properties mentioned above are effective, in the sense
that there is an algorithm that computes the $D$-finite specification
for a sum of two given $D$-finite functions, etc.
There are also algorithms for proving identities involving
$D$-finite functions, which are about to render tables of such
formulas (certain definite integral evaluations, e.g.)
obsolete to some extent. For further
information, see, e.g., Chyzak and Salvy~\cite{ChSa98} or Koutschan~\cite{Ko09}.

Summarizing, the class of $D$-finite functions is extensive,
but still there are plenty of ``explicit'' functions that are
not $D$-finite. The following theorem thus gives a partial
answer to the question asked in the title of the present note.

\section{Main result}\label{se:main}

\begin{theorem}\label{thm:main}
  The function $I:\mathcal{D}_I \subset \mathbb{R}^3 \to (0,\infty)$
  defined by~\eqref{eq:def I} is not $D$-finite.
\end{theorem}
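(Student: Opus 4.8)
The plan is to exploit the known asymptotics of implied volatility and combine them with structural restrictions that $D$-finiteness imposes on asymptotic expansions. First I would reduce the problem: by the closure of $D$-finite functions under algebraic substitution (mentioned in Section~\ref{se:d-fin}), it suffices to find one well-chosen algebraic curve in $\mathcal{D}_I$ along which the restriction of $I$ is a $D$-finite function of one variable whose asymptotics are incompatible with $D$-finiteness. A natural choice is to fix $S=K=1$ and let $c\downarrow 0$, i.e.\ consider the one-variable function $f(c):=I(1,1,c)$ for small $c>0$. At the money, the Black-Scholes price is $C_{\mathrm{BS}}(1,1,\sigma)=2N(\sigma\sqrt T/2)-1$, which can be inverted \emph{explicitly}: $f(c)=\frac{2}{\sqrt T}N^{-1}\!\bigl(\tfrac{1+c}{2}\bigr)$. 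So the at-the-money slice is essentially the inverse Gaussian cdf, and the task becomes showing that $c\mapsto N^{-1}((1+c)/2)$, equivalently $x\mapsto N^{-1}(x)$ near $x=1/2$, is not $D$-finite as a one-variable function — but this slice is real-analytic and not obviously singular, so instead I would push the slice to a \emph{boundary} of $\mathcal{D}_I$ where $I$ degenerates.

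Thus the better route is to take a slice approaching the lower no-arbitrage bound. Fix $K=1$ and a large parameter, or rather let $c\downarrow (S-1)^+$ with $S>1$ fixed, where $I\downarrow 0$; alternatively (and more cleanly) study the deep out-of-the-money regime $S=1$, $K\to\infty$ with $c$ scaled appropriately, where the celebrated Lee moment formula / the precise asymptotics of Gao--Lee give $I\sim\sqrt{2|\log c|/T}$ type behavior with further correction terms involving $\log\log$. The key point I want to extract is that along a suitable algebraic slice $c=c(u)$, the function $u\mapsto I$ has an asymptotic expansion as $u$ tends to a finite singular point (or to $\infty$) that contains iterated logarithms, or non-rational powers that fail to fit the rigid shape of asymptotic expansions of $D$-finite functions. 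The relevant hard fact to invoke here is the structure theorem for $D$-finite functions of one variable: near any singular point (including infinity) a solution of a linear ODE with polynomial coefficients has an asymptotic expansion in a Nilsson-type class — finite sums of terms $e^{p(1/x^{1/r})}x^{\alpha}(\log x)^k \times$(formal power series), with algebraic $\alpha$ and $r\in\mathbb N$; crucially there is \emph{no} nesting of logarithms like $\log\log$. So the strategy is: pick the slice so that $I$ exhibits a $\log\log$ term (or a genuinely transcendental dependence on $\log c$), contradicting this classification.

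Concretely, I would (i) choose the algebraic substitution, e.g.\ $S=1$, $K=1/c$ restricted to an algebraic relation, or parametrize the lower-bound boundary by a polynomial path, so that $g(u):=I(\text{path}(u))$ is $D$-finite in $u$ by the substitution closure; (ii) establish a sufficiently precise asymptotic expansion of $g(u)$ as $u$ approaches the relevant endpoint, using the implicit equation $C_{\mathrm{BS}}=c$ and the standard estimates $N(-x)\sim \varphi(x)/x$ to solve asymptotically for $\sigma$ — this yields $g(u)^2 \sim a\log(1/u) + b\log\log(1/u) + \cdots$; (iii) quote the one-dimensional $D$-finite structure theorem (Nilsson, or see \cite{Li89,St80}) to conclude that such an expansion is impossible for a $D$-finite function, since the $\log\log$ term cannot occur. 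The main obstacle I anticipate is step (ii): getting an asymptotic expansion of $I$ that is \emph{precise enough} to isolate the offending $\log\log$ (or other non-Nilsson) term, and doing so along an \emph{algebraic} slice rather than along the natural but transcendental scaling $K=e^{1/\sigma}$; one must be careful that the algebraic reparametrization does not accidentally wash out the iterated-logarithm term. A secondary technical point is making the one-variable $D$-finite classification rigorous in the form needed (covering the point at infinity and confirming the absence of nested logs), but that is available in the literature.
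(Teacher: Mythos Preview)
Your overall strategy is exactly the paper's: restrict $I$ to an algebraic curve in $\mathcal{D}_I$, invoke closure of $D$-finiteness under algebraic substitution to get a univariate $D$-finite function, and then rule this out via the classical structure theorem for solutions of linear ODEs with polynomial coefficients. Where you diverge is in the execution, and two points are worth flagging.

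First, the paper's choice of slice is much tidier than the ones you sketch. Taking $S=eK$ and $c=(e-1)K+eK^2$ (an algebraic path), one finds from the Roper--Rutkowski representation
\[
  C_{\mathrm{BS}}(S,K,\sigma)=(S-K)^{+}+S\int_0^{\sigma\sqrt{T}}N'\!\Big(\frac{\log(S/K)}{v}+\frac v2\Big)\,dv
\]
that the restricted function is \emph{exactly} $K\mapsto F^{-1}(K)/\sqrt{T}$, where $F(x)=\int_0^x N'(1/v+v/2)\,dv$. One then computes $F^{-1}(x)\sim (2\log(1/x))^{-1/2}$ as $x\to 0^{+}$. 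This avoids the worry you raise about the algebraic reparametrization washing out the forbidden term: the slice is engineered so that the inverse is explicit.

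Second, you focus on the $\log\log$ correction as the forbidden element, but that is working harder than necessary. The leading behaviour itself is a \emph{fractional} power of a logarithm, and the structure theorem already excludes $(\log x)^{1/2}$ or $(\log x)^{-1/2}$: only integer powers of $\log$ may appear. So you do not need the second-order expansion; the first term suffices. Incidentally, the at-the-money slice you abandoned would also have worked: $c\mapsto N^{-1}((1+c)/2)$ is analytic at $c=0$, but at the other endpoint $c\to 1^{-}$ it blows up like $\sqrt{2\log(1/(1-c))}$, which is the same forbidden shape.
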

\begin{proof}
  It is often hard to show directly
  that a multivariate function is not $D$-finite, whereas there are several
  methods available that deal with univariate functions~\cite{BeGeKlLu08,FlGeSa:05,Ge04}.
  Thus, our strategy is to find a useful specialization of~$I$ to a univariate function.
  We let the underlying's initial price be proportional to the strike: $S=eK$.
  For $0<K<1/e$, the inequality
  \[
    (e-1)K < \hat{c}(K) := (e-1)K + e K^2 < eK
  \]
  holds, and so the function
  \[
    f(K) := I(eK,K,\hat{c}(K))
  \]
  is well-defined for $0<K<1/e$. We assume that~$I$ is $D$-finite, and want
  to show that then~$f$ would be $D$-finite, too, by
  studying its Taylor series at $K=1/(2e)$.
  First, we expand the analytic function~$I$
  in a neighborhood of the point
  $(\tfrac12,\tfrac{1}{2e},\tfrac12-\tfrac{1}{4e})\in\mathcal{D}_I$:
  \begin{equation}\label{eq:I ser}
    I(S,K,c)=\sum_{i,j,k\geq0}\gamma_{ijk}(S-\tfrac12)^i(K-\tfrac{1}{2e})^j
      (c-(\tfrac12-\tfrac{1}{4e}))^k.
  \end{equation}
  Since~$I$ is $D$-finite, so is the formal power series
  \[
    \sum_{i,j,k\geq0} \gamma_{ijk} X^i Y^j Z^k
  \]
  in the indeterminates $X,Y,Z$.
  (Recall that this means that its partial derivatives
  generate a finite dimensional subspace of the space of formal Laurent
  series, which is a vector space over the field of rational functions
  $\mathbb{R}(X,Y,Z)$.)
  The algebraic substitution $X\to eX$, $Y\to X$,
  $Z\to(e-1)X+eX^2$ preserves $D$-finiteness~\cite{Ko09,Li89,St80}, and so the univariate
  formal power series
  \begin{equation}\label{eq:f fps}
    \sum_{i,j,k\geq0} (eX)^i X^k (eX+eX^2)^k
  \end{equation}
  is $D$-finite. This series represents the analytic function
  $x\mapsto f(x + \tfrac{1}{2e})$ in a neighborhood of zero. Indeed, by~\eqref{eq:I ser},
  \[
    f(K) = \sum_{i,j,k\geq0}\gamma_{ijk}(eK-\tfrac12)^i(K-\tfrac{1}{2e})^j
      (\hat{c}(K)-(\tfrac12-\tfrac{1}{4e}))^k,
  \]
  and this is~\eqref{eq:f fps} with~$X$ replaced by $K-1/(2e)$. 
  This shows that $D$-finiteness of~$I$
  implies $D$-finiteness of~$f$. 
  
  We will employ the following variant of the Black-Scholes formula
  (see Roper and Rutkowski~\cite{RoRu09}; note that the maturity
  $T>0$ is fixed throughout):
  \begin{equation}\label{eq:roper}
    C_{\mathrm{BS}}(S,K,\sigma) = (S-K)^+ + S \int_0^{\sigma \sqrt{T}}
      N'\Big(\frac{\log(S/K)}{v} + \frac{v}{2}\Big) dv.
  \end{equation}
  Define the function~$F:(0,\infty) \to (0,1/e)$ by
  \begin{equation}\label{eq:F def}
    F(x) := \int_0^x N'\Big(\frac1v+\frac{v}{2}\Big)dv.
  \end{equation}
  Then, for $S=eK$, the log-moneyness is $\log(S/K) = 1$, and
  equation~\eqref{eq:roper} becomes
  \[
    C_{\mathrm{BS}}(eK,K,\sigma) = (e-1)K + eK F(\sigma \sqrt{T}).
  \]
  By the definition of~$f$, we infer that the equation
  \[
    (e-1)K + eK F(\sqrt{T}f(K)) = \hat{c}(K), \quad 0<K<1/e,
  \]
  must hold, and hence
  \[
    F(\sqrt{T}f(K)) = \frac{\hat{c}(K)-(e-1)K}{eK} = K,
  \]
  or
  \[
    f(K) = F^{-1}(K)/\sqrt{T}, \quad 0<K<1/e.
  \]
  The following lemma shows that $F^{-1}$ is not $D$-finite,
  hence~$f$ and~$I$ are not $D$-finite, either.
\end{proof}

\begin{lemma}
  Let the function~$F:(0,\infty) \to (0,1/e)$ be defined by~\eqref{eq:F def}.
  Then the inverse function $F^{-1} : (0,1/e) \to (0,\infty)$ is not $D$-finite.
\end{lemma}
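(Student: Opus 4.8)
The plan is to pin down the asymptotics of $F^{-1}(y)$ as $y\to0^+$ and then to contradict $D$-finiteness via the rigid form that the asymptotics of a $D$-finite function must take near a singular point.

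First I would make the integrand in~\eqref{eq:F def} explicit,
\[
  N'\Big(\frac1v+\frac v2\Big)=\frac{1}{\sqrt{2\pi}}\exp\Big({-}\frac{1}{2v^2}-\frac12-\frac{v^2}{8}\Big),
\]
so that, for small $x>0$, the bounded factor $\e^{-v^2/8}$ is negligible on $(0,x]$ and the remaining integrand $\e^{-1/(2v^2)}$, which is increasing there, concentrates essentially all of its mass in a window of width of order $x^3$ around $v=x$. A routine Laplace-type estimate then gives $F(x)\sim\tfrac{\e^{-1/2}}{\sqrt{2\pi}}\,x^3\,\e^{-1/(2x^2)}$ as $x\to0^+$; for the sequel only the coarse consequence $\log F(x)=-\tfrac{1}{2x^2}(1+\oh(1))$ is needed. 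Putting $y=F(x)$ and solving for $x=F^{-1}(y)$ then yields
\begin{equation}\label{eq:Finv-asy}
  F^{-1}(y)\sim\frac{1}{\sqrt{2\log(1/y)}},\qquad y\to0^+.
\end{equation}

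Next I would suppose, for a contradiction, that $F^{-1}$ is $D$-finite, so that it is annihilated by some linear ODE with polynomial coefficients. Since $F^{-1}$ is not analytic at $0$ (an analytic function tending to $0$ at $0$ would be $\Oh(y)$, contradicting~\eqref{eq:Finv-asy}), the point $y=0$ must be a singular point of that operator. By the classical description of the solutions of a linear ODE with polynomial coefficients in the vicinity of a singular point (see \cite{FlGeSa:05,Ge04} and the references there), $F^{-1}$ admits, as $y\to0^+$, a leading asymptotic term of the form $c\,\e^{P(y^{-1/\nu})}\,y^{\alpha}(\log y)^{k}$ with $c\neq0$, $P$ a polynomial, $\nu\in\mathbb{Z}_{\geq1}$, $\alpha\in\mathbb{R}$ and $k\in\mathbb{Z}_{\geq0}$.

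It then remains to see that no such term is asymptotically equivalent to the right-hand side of~\eqref{eq:Finv-asy}. If $P$ is non-constant the term tends to $0$ or to $\infty$ faster than any power of $y$, whereas $(2\log(1/y))^{-1/2}$ decays more slowly than any positive power of $y$; hence $P$ is constant. Then $\alpha<0$ would make the term unbounded and $\alpha>0$ would make it $\oh\big((2\log(1/y))^{-1/2}\big)$, so $\alpha=0$; and the surviving term $c(\log y)^{k}$ tends to a nonzero constant (if $k=0$) or to $\infty$ (if $k\geq1$), never to $0$. This contradiction shows that $F^{-1}$ is not $D$-finite. The only delicate point is the invocation of the local structure of $D$-finite functions at a (possibly irregular) singular point together with this bookkeeping; the estimate for $F$ near $0$ is routine. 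As a cross-check one may argue instead at the other endpoint: a similar computation gives $F^{-1}(y)\sim\big(8\log\tfrac{1}{1/e-y}\big)^{1/2}$ as $y\to(1/e)^-$, and growth of order $(\log)^{1/2}$ at a finite singularity is equally incompatible with $D$-finiteness.
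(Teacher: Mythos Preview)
Your proof is correct and follows essentially the same route as the paper: both derive $F^{-1}(y)\sim(2\log(1/y))^{-1/2}$ as $y\to0^+$ from the Laplace-type estimate $F(x)\sim\tfrac{1}{\sqrt{2e\pi}}x^3 e^{-1/(2x^2)}$, and then appeal to the classical structure theorem for solutions of linear ODEs at a singular point to rule out a half-integer logarithmic power. You spell out the case analysis on $(P,\alpha,k)$ more explicitly than the paper (which simply cites that fractional log-powers are forbidden) and add the cross-check at $y\to(1/e)^-$, but the argument is the same.
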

\begin{proof}
  We will show that $F^{-1}$ has a singularity at zero whose type is incompatible
  with $D$-finiteness.
  First, we determine the asymptotics of~$F$ itself as $x \to 0^+$.
  For $0\leq v\leq x$, we have
  \[
    1 \geq e^{-v^2/8} \geq e^{-x^2/8} = 1 + \Oh(x^2),
  \]
  and so
  \begin{equation}\label{eq:n}
    N'\Big(\frac1v+\frac{v}{2}\Big) = \frac{1}{\sqrt{2e\pi}} e^{-1/(2v^2)}
      (1+\Oh(x^2)), \quad x\to 0^+.
  \end{equation}
  Substitution, integration by parts, and the expansion of~$N$ at infinity yield
  \begin{align}
    \frac{1}{\sqrt{2\pi}}\int_0^x e^{-1/(2v^2)} dv
    &= \frac{1}{\sqrt{2\pi}} \int_{1/x}^\infty \frac{e^{-y^2/2}}{y^2}dy \notag \\
    &= \frac{x}{\sqrt{2\pi}} e^{-1/(2x^2)} - \frac{1}{\sqrt{2\pi}}
      \int_{1/x}^\infty e^{-y^2/2} dy \notag \\
    &= \frac{x}{\sqrt{2\pi}} e^{-1/(2x^2)} -1 + N(1/x) \notag \\
    &= \frac{x^3}{\sqrt{2\pi}}e^{-1/(2x^2)}(1+\Oh(x^2)). \label{eq:int}
  \end{align}
  From~\eqref{eq:n} and~\eqref{eq:int} we conclude
  \begin{equation}\label{eq:F asympt}
    F(x) = \frac{x^3}{\sqrt{2e\pi}}e^{-1/(2x^2)}
      (1+\Oh(x^2)), \quad x\to 0^+,
  \end{equation}
  and thus, taking logarithms,
  \begin{equation}\label{eq:F asympt log}
    -\log F(x) = \frac{1}{2x^2} + \Oh\Big(\log \frac1x\Big), \quad x\to 0^+.
  \end{equation}
  Now we study the asymptotics of~$F^{-1}$. For small~$x>0$,
  formula~\eqref{eq:F asympt} yields $F(x) \leq e^{-1/(2x^2)}$.
  Replacing~$x$ by $F^{-1}(x)$ in this inequality and rearranging, we obtain
  \begin{equation}\label{eq:F inv est}
    F^{-1}(x) \leq \frac{1}{\sqrt{2 \log(1/x)}}.
  \end{equation}
  Now we replace~$x$ by $F^{-1}(x)$ in~\eqref{eq:F asympt log} and
  use~\eqref{eq:F inv est} in the $\Oh$-estimate:
  \begin{equation}\label{eq:log F}
    \log \frac1x = \frac{1}{2 F^{-1}(x)^2} + \Oh\Big(\log\log \frac1x\Big).
  \end{equation}
  This implies that~\eqref{eq:F inv est} is sharp:
  \begin{equation}\label{eq:F inv asympt}
    F^{-1}(x) \sim \frac{1}{\sqrt{2 \log(1/x)}}, \quad x\to 0^+.
  \end{equation}
  Observe that a univariate $D$-finite function~$g$ solves an ODE
  \[
    P_d(x) g^{(d)}(x) + \dots + P_1(x) g'(x) + P_0(x) g(x) = 0
  \]
  with polynomial coefficients, where $P_d$ is not identically zero. Such a function
  has an analytic continuation throughout the complex plane, up to a finite number
  of poles and branch points. At each such singularity~$x_0$, an asymptotic expansion
  holds, which may feature powers of~$x-x_0$, exponential terms, and
  \emph{integral} powers of $\log (x-x_0)$.
  This 19th century result~\cite{In26} is a powerful method for proving
  non-$D$-finiteness; see Flajolet et al.~\cite{FlGeSa:05,FlGeSa10} for details.
  In particular, a \emph{fractional} power of a logarithm, as in~\eqref{eq:F inv asympt},
  is a ``forbidden'' asymptotic element, and shows that~$F^{-1}$ is
  not $D$-finite.
\end{proof}

\bibliographystyle{siam}
\bibliography{../gerhold}

\end{document}